\documentclass[sigconf,nonacm]{acmart}
\usepackage{booktabs,multirow,array}
\usepackage{amsmath}
\usepackage{xcolor,graphicx}
\usepackage{colortbl}
\usepackage{enumitem}
\usepackage{float}
\usepackage{tikz}
\usepackage{pifont}
\usetikzlibrary{arrows.meta,positioning,fit,calc,backgrounds}

\newtheorem{appcproposition}{Proposition}[section]

\setcopyright{none}

\begin{document}
\title{LESSRec: Level-Specific History Modeling for Semantic ID-Based Recommendation}
\author{Shangxin Yang}
\affiliation{
  \institution{Chongqing University}
  \city{Chongqing}
  \country{China}
}
\email{yangshangxin@stu.cqu.edu.cn}

\author{Zongwei Wang}
\affiliation{
  \institution{Chongqing University}
  \city{Chongqing}
  \country{China}
}
\email{zongwei@cqu.edu.cn}

\author{Jing Long}
\affiliation{
  \institution{Chongqing University}
  \city{Chongqing}
  \country{China}
}
\email{jing.long@cqu.edu.cn}

\author{Junliang Yu}
\affiliation{
  \institution{Griffith University}
  \city{Brisbane}
  \country{Australia}
}
\email{junl.yu@outlook.com}

\author{Min Gao}
\authornote{Corresponding author.}
\affiliation{
  \institution{Chongqing University}
  \city{Chongqing}
  \country{China}
}
\email{gaomin@cqu.edu.cn}
\renewcommand{\shortauthors}{Yang et al.}

\begin{abstract}
Semantic ID-based recommendation represents each item as a sequence of semantic tokens obtained through residual quantization. This paradigm generates the next item's tokens autoregressively from coarse to fine, entailing each prediction attending to historical tokens at all semantic levels. However, our inference-time analysis reveals that  historical tokens at the target-token's semantic level tend to be more informative, whereas other levels can introduce interference. Motivated by these findings, we propose LESSRec that separately encodes the historical token sequence at each semantic level and restricts attention over historical tokens to the sequence at the target token's semantic level during prediction. To complement these level-specific histories, LESSRec explicitly provides an item-level semantic sequence that represents each historical item's overall semantics in chronological order. The decoder accesses the level-aligned token sequence and the item-level semantic sequence through two separate cross-attention branches and fuses their contexts. We further introduce LESSRec+, which uses separate encoders for different semantic-level histories with the same total parameter count to reduce encoder computation. Extensive experiments across three public datasets and three tokenizers demonstrate the recommendation effectiveness of LESSRec and LESSRec+, with efficiency analyses highlighting the computational benefits of LESSRec+.
\end{abstract}

\begin{CCSXML}
<ccs2012>
<concept>
<concept_id>10002951.10003317.10003347.10003350</concept_id>
<concept_desc>Information systems~Recommender systems</concept_desc>
<concept_significance>500</concept_significance>
</concept>
<concept>
<concept_id>10010147.10010257.10010293.10010294</concept_id>
<concept_desc>Computing methodologies~Neural networks</concept_desc>
<concept_significance>300</concept_significance>
</concept>
</ccs2012>
\end{CCSXML}
\ccsdesc[500]{Information systems~Recommender systems}

\keywords{Generative Recommendation, Semantic ID, Tokenization}
\maketitle

\section{Introduction}
 Recommender systems support personalized item discovery on Web platforms such as online retail services~\cite{smith2017two}. Generative recommendation has recently gained attention as a new paradigm that formulates item prediction as a sequence generation task ~\cite{guo2026onesug,zhou2025onerec,fu2026forge,yang2025sparsea}. Under this paradigm, semantic ID-based approaches have become particularly prominent, which predict items by generating their semantic IDs, each comprising an ordered sequence of discrete semantic tokens obtained through residual quantization~\cite{rajput2023recommender,wang2024learnable,zhu2024costa}. Within each item's semantic IDs, earlier tokens typically represent coarse-grained item semantics, while later tokens encode finer-grained information~\cite{wei2026cofireca,liu2025generative,zheng2024adaptingb}. This structure enables semantically similar items to share coarse-level prefix tokens, capturing their common semantics while retaining finer-grained distinctions.

Conventional semantic ID-based recommendation models adopt an encoder–decoder architecture~\cite{rajput2023recommender,liu2025generative,zhou2025onerec}. The encoder contextualizes historical semantic tokens through self-attention, allowing tokens from different items and semantic levels to exchange information. The decoder can then attend to encoded tokens from all historical items and semantic levels, regardless of the level being predicted. This design brings together two distinct sequential structures: tokens at the same level across interactions capture behavioral patterns at a consistent semantic granularity, whereas tokens within an item describe its progressive semantic refinement. Given these different roles, it is unclear whether information from all levels is equally useful for next token prediction. This motivates a key question: when predicting a token at a particular semantic level, which levels in the interaction history provide useful evidence?

\begin{figure}[t]
\centering
\begingroup
\edef\introPanelWidthA{\the\dimexpr\columnwidth*60/100\relax}
\edef\introPanelWidthBC{\the\dimexpr\columnwidth*39/100\relax}
\setbox4=\hbox{\parbox{\introPanelWidthA}{\centering\fontsize{7}{8}\selectfont\strut(a) History-access patterns\\for a level-2 target\strut}}
\setbox6=\hbox{\parbox{\introPanelWidthBC}{\centering\fontsize{7}{8}\selectfont\strut(b) Standard TIGER (shared encoding)\strut}}
\setbox8=\hbox{\parbox{\introPanelWidthBC}{\centering\fontsize{7}{8}\selectfont\strut(c) TIGER with level-isolated encoding\strut}}
\setbox0=\hbox{\includegraphics[width=\introPanelWidthA,trim=1.5bp 0bp 0bp 0bp,clip]{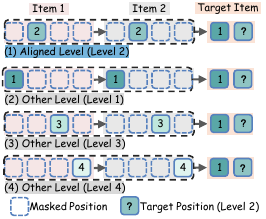}}
\edef\introSmallPanelHeight{\the\dimexpr(\ht0+\dp0+\ht4+\dp4-\ht6-\dp6-\ht8-\dp8-4pt)/2\relax}
\setbox0=\vbox{\offinterlineskip
\box0
\kern1pt
\box4}
\setbox2=\vbox{\offinterlineskip
\hbox to \introPanelWidthBC{\hfil\includegraphics[height=\introSmallPanelHeight]{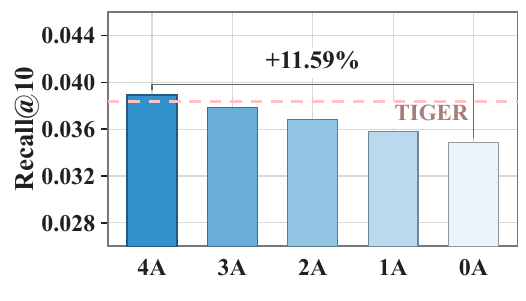}\hfil}
\kern1pt
\box6
\kern3pt
\hbox to \introPanelWidthBC{\hfil\includegraphics[height=\introSmallPanelHeight]{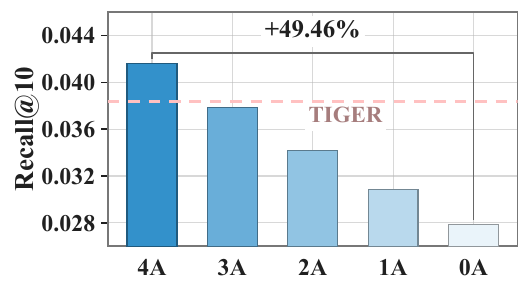}\hfil}
\kern1pt
\box8}
\typeout{Figure 1 panel heights: A=\the\dimexpr\ht0+\dp0\relax; BC=\the\dimexpr\ht2+\dp2\relax}
\noindent\resizebox{\columnwidth}{!}{\box0\hspace{3pt}\box2}
\endgroup
\caption{Same-level vs. cross-level history access for prediction in TIGER under standard and level-isolated encoding (Recall@10 on Games). Under inference-time cross-attention masking, $(m)$A denotes $(m)$ of four target positions using same-level access (e.g., (1) in (a)); the remaining positions use level-mismatched access (e.g., (2), (3), and (4) in (a)). Bars average configurations with the same $(m)$. Dashed lines mark the original TIGER baseline; percentages show relative gains of 4A over 0A.}
\Description{A schematic on the left and two vertically stacked bar charts on the right. Panel a illustrates level-aligned and level-mismatched history access when predicting a level-2 token. Panels b and c show Recall at 10 on Games for standard TIGER and TIGER trained with level-isolated encoder self-attention, respectively. Each chart compares four, three, two, one, and zero aligned target positions. Both dashed lines show the same standard TIGER baseline with unrestricted history access during training and inference.}
\label{fig:alignment_ablation}
\end{figure}

To understand how the decoder uses historical information from different semantic levels for prediction, we conduct an inference-time masking experiment on TIGER \cite{rajput2023recommender}, a representative encoder--decoder generative recommender. We keep all parameters fixed and modify only the decoder's cross-attention mask, restricting each target position to tokens at a single semantic level in the interaction history (Figure~\ref{fig:alignment_ablation}(a)). Across masking configurations, the number of visible tokens remains constant, but performance improves as more target positions attend to their matching semantic levels (Figure~\ref{fig:alignment_ablation}(b)). With all positions aligned, each prediction accesses only one quarter of the history tokens, yet the recommendation performance slightly outperforms that of original TIGER, which uses unrestricted history access during both training and inference. These results suggest that explicitly matching historical access to the target semantic level can be more effective than providing access to all levels.

However, TIGER's encoder mixes information across semantic levels before decoding, so attending to positions at one level does not isolate information from that level. To examine whether this mixing affects the observed alignment benefit, we train a TIGER variant that restricts encoder self-attention to tokens at the same semantic level, leaving the rest of the architecture unchanged. We then repeat the inference-time masking experiment on this variant. Figure~\ref{fig:alignment_ablation}(c) shows a larger contrast between aligned and mismatched access. Within each model, fully aligned access improves Recall@10 over fully mismatched access by 11.59\% under standard encoding and 49.46\% under level-isolated encoding. With cross-level mixing restricted during encoding, the larger gap between aligned and mismatched access provides clearer evidence that same-level history is more useful for prediction than cross-level history.

Motivated by these findings, we propose LESSRec (LEvel-Specific history modeling for Semantic ID-based Recommendation) to better use level-specific historical evidence. Specifically, Level-Isolated Self-Attention separately contextualizes historical tokens at each semantic level, preserving same-level associations across items. During prediction, the decoder uses Level-Aligned Cross-Attention to attend to historical semantic tokens at the same level as the target token. However, a single level offers only a partial semantic view, and relying on it alone may overlook how different levels complement one another in describing an item. To preserve this complementarity, we combine the contextualized states from different levels within each historical item to construct an Item-Level Semantic Sequence. The decoder accesses this sequence through a separate cross-attention branch and fuses its context with the level-aligned context, combining level-specific historical evidence with complementary item semantics for prediction.

LESSRec implements this organization through attention masks while sharing encoder layers across semantic levels. To avoid unnecessary computation over masked cross-level interactions and give each level its own encoder-layer parameters, we further propose LESSRec+. LESSRec+ partitions the existing encoder layers into level-specific groups, each processing only the historical semantic tokens at its assigned level. This reduces encoder computation and intermediate activation storage while retaining the same total parameter count as LESSRec.

Our main contributions are as follows:
\begin{itemize}[leftmargin=*,nosep]
\item To the best of our knowledge, we are the first to identify the predictive advantage of same-level history in semantic ID-based generative recommendation.
\item We propose LESSRec to jointly organize same-level historical associations across items and complementary semantic information within each item for prediction at each semantic level. LESSRec+ further reorganizes the encoder into level-specific groups, reducing computation while retaining the same parameter count as LESSRec.
\item Experiments on three datasets with three tokenizers demonstrate the recommendation effectiveness of LESSRec and LESSRec+. Component ablations support the proposed design, while efficiency analyses show the computational benefits of LESSRec+.
\end{itemize}

\section{Preliminaries}
\label{sec:preliminary}
\label{sec:problem}

Let $\mathcal{I}=(v_1,\ldots,v_T)$ be a user's interaction history in chronological order. Each item $v_i$ is represented by its semantic IDs, obtained through the residual quantization process described in Appendix~\ref{app:tokenization}:
\[
\mathbf{s}_i=(s_{i,1},\ldots,s_{i,L}),
\]
where $s_{i,r}$ is the semantic token at level $r\in\{1,\ldots,L\}$. The item-to-semantic-ID mapping remains fixed after tokenization. We concatenate the semantic IDs of historical items in interaction order to form the input:
\begin{equation}
\mathbf{x}=(s_{1,1},\ldots,s_{1,L},\ldots,s_{T,1},\ldots,s_{T,L},\mathrm{EOS}).
\label{eq:input_sequence}
\end{equation}
The decoder autoregressively generates the next item's semantic IDs, $Y_1,\ldots,Y_L$, followed by EOS. At target position $k$, the prediction is conditioned on the history $\mathbf{x}$ and the preceding target tokens $C_k=Y_{<k}$.

For each semantic level $r$, we define the same-level historical sequence as
\[
H_r=(s_{1,r},s_{2,r},\ldots,s_{T,r}),
\]
which contains the level-$r$ tokens of historical items in interaction order. To specify which tokens are visible under semantic-level masking, let $p(i,r)$ denote the zero-based position of $s_{i,r}$ in the input sequence $\mathbf{x}$:
\[
p(i,r)=L(i-1)+(r-1).
\]
The set $\mathcal{P}_r=\{p(i,r):1\leq i\leq T\}$ collects the input positions of tokens in $H_r$. Including the end token EOS at $p_{\mathrm{end}}=LT$, we define $\mathcal{P}_{\mathrm{valid}}=\left(\bigcup_{r=1}^{L}\mathcal{P}_r\right)\cup\{p_{\mathrm{end}}\}$ as the set of all valid input positions, excluding padding.

\section{Method}
\label{sec:method}
To address cross-level interference, we propose LESSRec (Figure~\ref{fig:lessrec_framework}). It comprises level-isolated self-attention (Section~\ref{sec:slf}), level-aligned cross-attention (Section~\ref{sec:lmca}), and an item-level semantic sequence (Section~\ref{sec:icm}). After describing these components and their integration, we introduce LESSRec+ with level-specific encoders (Section~\ref{sec:slirec_plus}).

\begin{figure}[t]
\centering
\includegraphics[width=\columnwidth]{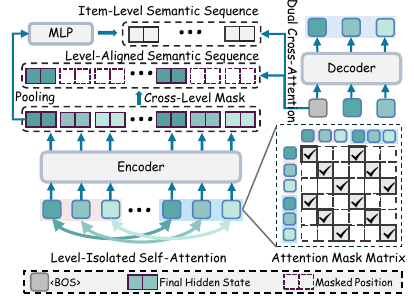}
\caption{Overview of LESSRec. For visual clarity, each item is represented by three semantic tokens, and \texttt{<EOS>} tokens are omitted. Shades of the same color distinguish semantic levels, from coarse (dark) to fine (light). In the attention mask matrix, \ding{51} indicates visible positions and blank cells indicate masked positions.}
\Description{A single-column encoder--decoder framework showing level-isolated self-attention and its mask matrix, cross-level masking of the encoded history, within-item pooling and an MLP to form an item-level semantic sequence, and dual cross-attention to both sequences.}
\label{fig:lessrec_framework}
\end{figure}

\subsection{Level-Isolated Self-Attention (LIS)}
\label{sec:slf}
To help the model learn same-level contextual information without interference from other levels, we introduce Level-Isolated Self-Attention (LIS). LIS restricts each historical semantic token to attending only to tokens at the same level across items:
\begin{equation}
M^{\mathrm{enc}}_{qj}=
\begin{cases}
0,&q\in\mathcal{P}_r,\ j\in\mathcal{P}_r,\\
0,&q=p_{\mathrm{end}},\ j\in\mathcal{P}_{\mathrm{valid}},\\
-\infty,&\text{otherwise}.
\end{cases}
\label{eq:encoder_mask}
\end{equation}
Here, $q$ and $j$ are the zero-based positions of the query token and the token being attended to in $\mathbf{x}$, respectively. This mask is applied in every encoder layer so that historical semantic tokens exchange information only within their corresponding level sequences. Because EOS serves as a shared terminator for the complete input sequence, we allow its encoder state to attend to all valid positions, excluding padding. However, historical semantic tokens cannot attend to EOS: $M^{\mathrm{enc}}_{q,p_{\mathrm{end}}}=-\infty$ for all $q\in\bigcup_{r=1}^{L}\mathcal{P}_r$. Thus, information aggregated by EOS cannot flow back into these token states, and the historical sequences remain semantically isolated across levels during encoding.

Let $\mathbf{h}_{i,r}\in\mathbb{R}^{d}$ denote the final encoded state of the level-$r$ token of item $i$. Each state incorporates historical context from $H_r$ without mixing information from tokens at other levels.

\subsection{Level-Aligned Cross-Attention (LAC)}
\label{sec:lmca}
Although LIS separates historical semantic levels during encoding, unrestricted cross-attention still allows the decoder to combine information from different levels. To focus each prediction on same-level history, we introduce Level-Aligned Cross-Attention (LAC): when predicting a level-$r$ target token, the level-aligned branch attends only to historical level-$r$ tokens.

The cross-attention mask for the level-aligned branch is
\begin{equation}
M^{\mathrm{cross}}_{tj}=
\begin{cases}
0,&0\leq t<L,\ j\in\mathcal{P}_{t+1}\cup\{p_{\mathrm{end}}\},\\
0,&t=L,\ j\in\mathcal{P}_{\mathrm{valid}},\\
-\infty,&\text{otherwise}.
\end{cases}
\label{eq:raw_cross_mask}
\end{equation}
Here, $t\in\{0,\ldots,L\}$ denotes the zero-based decoder query position, and $j$ denotes the zero-based position of an encoded history token. Positions $t=0,\ldots,L-1$ predict the $L$ semantic tokens, and $t=L$ predicts the end token.

LIS and LAC act at complementary stages: LIS prevents cross-level mixing in historical semantic-token states, and LAC selects the matching level's states during prediction. Together, they enable the decoder to use contextual information from same-level history without attending to historical semantic tokens at other levels. Appendix~\ref{app:level_theory}  provides a theoretical analysis of same-level history access under simplified assumptions.

\subsection{Item-Level Semantic Sequence (ILS)}
\label{sec:icm}
LIS and LAC help the decoder use same-level history, but attending to a single semantic level limits access to the combined semantics of each historical item. We therefore construct an Item-Level Semantic Sequence (ILS), in which each representation combines an item's encoded semantic tokens and the representations follow the user's item interaction order. For each historical item $v_i$, we construct an item-level representation by averaging the LIS-encoded representations of its $L$ semantic tokens:
\begin{equation}
\mathbf{z}_i=\frac{1}{L}\sum_{r=1}^{L}\mathbf{h}_{i,r}.
\label{eq:item_pool}
\end{equation}
Since each $\mathbf{h}_{i,r}$ already incorporates same-level historical context through LIS, the pooled representation $\mathbf{z}_i$ combines the item's semantics with this context. We therefore construct the item-level sequence from these contextualized representations without an additional self-attention step.

To refine each item representation while retaining the pooled information, we apply a residual projection:
\begin{gather}
\mathbf{m}_i=\mathbf{z}_i+\operatorname{MLP}\!\left(\operatorname{LN}(\mathbf{z}_i)\right),
\label{eq:item_mlp}\\
\operatorname{MLP}\!\left(\operatorname{LN}(\mathbf{z}_i)\right)
=W_2\operatorname{GELU}\!\left(W_1\operatorname{LN}(\mathbf{z}_i)+\mathbf{b}_1\right)+\mathbf{b}_2.
\label{eq:item_mlp_expansion}
\end{gather}
Here, LN denotes layer normalization, and MLP is a two-layer feed-forward network with GELU (Gaussian error linear unit) activation. The residual connection adds the learned nonlinear update to the pooled representation $\mathbf{z}_i$. The resulting sequence $\mathbf{M}=(\mathbf{m}_1,\ldots,\mathbf{m}_T)$ represents historical items in interaction order and provides item-level context alongside the level-aligned history.

\subsection{Dual Cross-Attention Fusion}
\label{sec:memory_fusion}
\label{sec:item_position}
To combine information from the level-aligned semantic sequence and the item-level semantic sequence constructed above, we design dual cross-attention fusion. Let $\mathbf{R}$ denote the encoded historical token sequence produced by LIS and $\mathbf{M}$ the item-level semantic sequence. For the same decoder query $\mathbf{Q}$, two multi-head cross-attention branches~\cite{vaswani2017attention} separately attend to these sequences:
\begin{align}
\mathbf{o}_{\mathrm{level}}
&=\operatorname{MHA}(\mathbf{Q},\mathbf{R},\mathbf{R};
 B_{\mathrm{level}}+M^{\mathrm{cross}}),\label{eq:raw_branch}\\
\mathbf{o}_{\mathrm{item}}
&=\operatorname{MHA}(\mathbf{Q},\mathbf{M},\mathbf{M};
 B_{\mathrm{item}}+M^{\mathrm{item}}).\label{eq:item_branch}
\end{align}
Here, $M^{\mathrm{cross}}$ is the level-aligned cross-attention mask defined in Equation~\ref{eq:raw_cross_mask}, and $M^{\mathrm{item}}$ masks padded positions in the item-level sequence. $B_{\mathrm{level}}$ is the original relative position bias that distinguishes semantic-token positions in the level-specific history.

This token-based bias does not match the item-level semantic sequence: consecutive historical items occupy adjacent positions in $\mathbf{M}$, whereas their corresponding same-level tokens are $L$ positions apart in the input semantic-token sequence. We therefore introduce $B_{\mathrm{item}}$ to model positions in the item-level semantic sequence:
\begin{equation}
B_{\mathrm{item}}^{(h)}(t,j)
=E_{\mathrm{item}}^{(h)}
\left[\operatorname{bucket}(j-t)\right].
\label{eq:item_position_bias}
\end{equation}
Here, $t$ is the decoder query position, $j\in\{0,\ldots,T-1\}$ indexes historical items in interaction order, and $h$ indexes attention heads. The function $\operatorname{bucket}(\cdot)$ groups relative offsets into distance ranges and returns the corresponding index, which is used to retrieve a bias from the learned table $E_{\mathrm{item}}$~\cite{raffel2020exploringb}. Defining these offsets over item positions provides the decoder with positional information aligned with the chronological order of historical item interactions.

We fuse the contexts from the level-aligned history and the item-level semantic sequence through a weighted sum of the two branch outputs:
\begin{equation}
\mathbf{o}=(1-\alpha)\mathbf{o}_{\mathrm{level}}
+\alpha\mathbf{o}_{\mathrm{item}}.
\label{eq:dual_mix}
\end{equation}
Here, $\alpha\in[0,1]$ is a hyperparameter controlling the relative contributions of the level-aligned history and the item-level semantic sequence. This fusion retains information from the historical sequence at the target semantic level while incorporating the overall semantics of items in interaction order.

\subsection{Training and Inference}
\label{sec:training}
We jointly optimize the encoder, decoder, residual item projection, and item-level positional bias using the average autoregressive negative log-likelihood:
\begin{equation}
\mathcal{L}
=-\frac{1}{(L+1)|\mathcal{D}|}
\sum_{(\mathbf{x},\mathbf{y})\in\mathcal{D}}
\sum_{k=1}^{L+1}
\log p_{\theta}(y_k\mid y_{<k},\mathbf{x}),
\label{eq:objective}
\end{equation}
where $\mathcal{D}$ contains history--target pairs and $\mathbf{y}$ consists of the next item's $L$ semantic tokens followed by the end token. During training, each prediction is conditioned on the preceding tokens in the ground-truth target sequence. No auxiliary objective is required, and the tokenizer and item-to-semantic-ID mapping remain fixed.

At inference, the decoder autoregressively generates the next item's semantic IDs using trie-constrained beam search over valid identifiers~\cite{cao2020autoregressive}.

\subsection{LESSRec+: Level-Specific Encoders}
\label{sec:slirec_plus}
Level-Isolated Self-Attention provides a simple implementation of semantic-level isolation by masking encoder self-attention. However, all semantic levels still share the same encoder layers, and attention scores are computed for all pairs of historical semantic tokens before masking. LESSRec+ instead divides the original encoder layers into level-specific groups, each encoding only the historical token sequence at its assigned semantic level (Figure~\ref{fig:slirec_plus_encoder}).

\begin{figure}[t]
\centering
\includegraphics[width=\columnwidth]{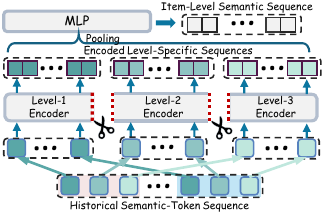}
\caption{Encoder-side illustration of LESSRec+. Its decoder uses the same dual cross-attention as in Figure~\ref{fig:lessrec_framework}.}
\Description{The encoder side of LESSRec+, showing historical semantic tokens grouped into level-specific sequences, separate encoders, and pooling followed by an MLP to construct an item-level semantic sequence. The decoder is omitted and retains the dual cross-attention design of LESSRec.}
\label{fig:slirec_plus_encoder}
\end{figure}

For each semantic level $r$, we form the level-specific input $\mathbf{x}^{(r)}=(s_{1,r},\ldots,s_{T,r},\mathrm{EOS})$. The original encoder has $D$ layers. We divide them into $L$ disjoint groups, assigning $D_r$ consecutive layers to semantic level $r$ with $\sum_{r=1}^{L}D_r=D$. Each group forms an encoder $E_r$ that processes only $\mathbf{x}^{(r)}$. By reusing the existing embeddings and partitioning the encoder layers, each encoder independently models its level's history without increasing the total parameter count.

After encoding, we restore each historical token state to its position $p(i,r)$ in the original layout. The level-specific end-token states are averaged to obtain a single encoder EOS state:
\begin{equation}
\widetilde{\mathbf{h}}_{\mathrm{end}}
=\frac{1}{L}\sum_{r=1}^{L}\mathbf{h}^{(r)}_{\mathrm{end}}.
\label{eq:plus_end_pool}
\end{equation}
The restored states are used to construct Item-Level Semantic Sequence and supply the level-aligned token history. All other modules and the training objective remain as defined above.

Each encoder now processes $T+1$ positions. Summed over all encoder layers, the attention-score matrix size is $D(T+1)^2$, compared with $D(LT+1)^2$ for masked attention in LESSRec. Moreover, the level-specific encoders have no computational dependencies on one another and can therefore run in parallel, reducing the longest sequential path through encoder layers from $D$ to $\max_r D_r$. Processing shorter input sequences reduces encoder computation and intermediate activation storage without adding parameters; we compare the models in terms of computational complexity, GPU memory usage, and efficiency, and analyze encoder layer allocation in the experimental section.

\section{Experiments}
\label{sec:experiments}
We organize the experiments around five research questions:
\begin{itemize}[leftmargin=*,nosep]
\item \textbf{RQ1:} How do LESSRec and LESSRec+ perform against recommendation baselines across datasets and tokenization methods?
\item \textbf{RQ2:} How do the three main components (LIS, LAC, and ILS) affect recommendation performance?
\item \textbf{RQ3:} How do different settings ($\alpha$, the number of semantic tokens per item, and encoder layer allocation) affect recommendation performance?
\item \textbf{RQ4:} How do the models compare in computational complexity, parameter count, peak GPU memory usage, and runtime?
\item \textbf{RQ5:} How does interaction-history length affect the recommendation performance and encoding efficiency of LESSRec and LESSRec+?
\end{itemize}

\subsection{Experimental Settings}
\label{sec:setup}

\subsubsection{Datasets}
{\emergencystretch=1em
We evaluate LESSRec on Games (Video Games), Software, and Beauty. Games and Software use Amazon Reviews 2023~\cite{hou2026bridginga}; its source collection covers May 1996 to September 2023. Beauty comes from an earlier Amazon review collection~\cite{ni2019justifying}. Following previous studies~\cite{rajput2023recommender,wang2024learnable}, we iteratively remove users and items with fewer than five interactions until the remaining data satisfy the 5-core requirement. We restrict the number of items in a user’s history to 20. Table~\ref{tab:datasets} reports the statistics after preprocessing.}

\begin{table}[H]
\caption{Dataset statistics after final preprocessing. Avg. length is the mean number of interactions per user.}
\label{tab:datasets}\centering\small
\resizebox{\columnwidth}{!}{\begin{tabular}{lrrrrr}\toprule
Dataset & Users & Items & Interactions & Sparsity & Avg. length\\\midrule
Games & 26,443 & 10,512 & 202,250 & 99.9272\% & 7.65\\
Software & 22,225 & 4,654 & 200,915 & 99.8058\% & 9.04\\
Beauty & 22,363 & 12,101 & 198,502 & 99.9266\% & 8.88\\\bottomrule
\end{tabular}}
\end{table}

\subsubsection{Baselines}
We compare LESSRec with conventional sequential recommenders and semantic ID-based generative recommenders: \textbf{Caser}~\cite{tang2018personalized} models interaction sequences with convolutional filters. \textbf{GRU4Rec}~\cite{hidasi2016session} uses recurrent neural networks to model sequential interactions. \textbf{BERT4Rec}~\cite{sun2019bert4rec} learns bidirectional interaction representations through masked-item prediction. \textbf{SASRec}~\cite{kang2018self} models sequential interactions using self-attention. \textbf{RPG}~\cite{hou2025generating} generates the tokens of long, unordered semantic IDs in parallel. \textbf{CARE}~\cite{lin2026bringing} combines progressive history encoding with query-anchored reasoning to mitigate bias amplification. \textbf{DiscRec}~\cite{liu2025discreca} is a structure-aware recommender that combines intra-item attention with gated embedding fusion. \textbf{TIGER}~\cite{rajput2023recommender} uses RQ-VAE to quantize item representations into semantic IDs and autoregressively generates the next item's semantic IDs from the interaction history. We instantiate and compare TIGER, LESSRec, and LESSRec+ using three tokenizers. \textbf{RQ-VAE}~\cite{lee2022autoregressive} learns hierarchical codebooks by recursively quantizing residuals of item semantic representations. The \textbf{RQ-Kmeans} setting uses K-means-guided conditional residual quantization with trainable subcodebooks. Appendix~\ref{app:tokenization} details both implementations. \textbf{LETTER}~\cite{wang2024learnable} aligns quantized item representations with collaborative item embeddings and applies diversity regularization to alleviate code-assignment bias.

\subsubsection{Evaluation Settings}
We adopt a leave-one-out evaluation protocol. Recommendation quality is measured using Recall@5, Recall@10, NDCG@5, and NDCG@10. Within each tokenization setting, comparisons among TIGER, LESSRec, and LESSRec+ use the same data partitions, fixed semantic IDs, history length, candidate index, decoding settings, and checkpoint-selection rule.

\subsubsection{Generation of semantic IDs}
\label{sec:sid_generation}
For each dataset, semantic IDs are generated offline and remain fixed during recommendation training and evaluation. For RQ-VAE, RQ-Kmeans, and LETTER, we use four semantic tokens per item ($L=4$) and 256 codewords per codebook. Item semantic embeddings are extracted from text using Sentence-T5~\cite{ni2022sentence}. For LETTER, we use SASRec~\cite{kang2018self} to obtain collaborative item embeddings. For the RPG baseline, we use 32-token semantic IDs.

\subsubsection{Implementation Details}
TIGER, LESSRec, and LESSRec+ are implemented with Hugging Face Transformers using a randomly initialized T5-small backbone~\cite{raffel2020exploringb}, with six encoder and six decoder layers and a hidden dimension of 512. LESSRec+ allocates the six encoder layers across semantic levels with depths [2,2,1,1]. The item-memory MLP has an intermediate dimension of 256, and $\alpha=0.5$. We use AdamW with a learning rate of $5\times10^{-4}$, weight decay of 0.01, and a cosine schedule with a 1\% warmup ratio. The training batch size is 256 with two gradient accumulation steps. Each run uses a single NVIDIA GeForce RTX 4090 GPU (24 GB) for at most 200 epochs, stopping after 20 epochs without validation-loss improvement and selecting the checkpoint with the lowest validation loss. Inference uses a batch size of 32 and trie-constrained beam search with a beam size of 10. For the generative recommendation baselines, we utilize their official implementations. RPG uses a randomly initialized Transformer decoder with two layers, a hidden dimension of 448, and four attention heads. CARE uses the pretrained Qwen2.5-0.5B backbone~\cite{qwen2025qwen25}, retaining its 24 Transformer layers, hidden dimension of 896, and 14 attention heads. DiscRec is also implemented with a T5-small backbone~\cite{raffel2020exploringb}. For the conventional sequential recommendation baselines, we implement Caser, GRU4Rec, BERT4Rec, and SASRec using RecBole~\cite{zhao2021recbolea,zhao2022recbole}.

\subsection{Overall Performance (RQ1)}
\label{sec:results}

\begin{table*}[!t]
\caption{Main recommendation results on three public datasets. Score cells report mean performance over seeds $\{42,43,44\}$. Shaded ``Improv.'' rows report percentage gains over the baseline within the same tokenization group. Bold values indicate statistically significant improvements over the corresponding baseline under a paired t-test ($p < 0.05$).}
\label{tab:main}\centering\small
\setlength{\tabcolsep}{3.2pt}
\resizebox{\textwidth}{!}{\begin{tabular}{lrrrrrrrrrrrr}\toprule
&\multicolumn{4}{c}{Games}&\multicolumn{4}{c}{Software}&\multicolumn{4}{c}{Beauty}\\
\cmidrule(lr){2-5}\cmidrule(lr){6-9}\cmidrule(lr){10-13}
Method&R@5&R@10&N@5&N@10&R@5&R@10&N@5&N@10&R@5&R@10&N@5&N@10\\\midrule
Caser &0.0278&0.0456&0.0172&0.0229&0.1296&0.1920&0.0829&0.1029&0.0172&0.0299&0.0105&0.0146\\
GRU4Rec &0.0278&0.0441&0.0171&0.0224&0.1241&0.1815&0.0765&0.0951&0.0138&0.0234&0.0086&0.0117\\
BERT4Rec &0.0263&0.0415&0.0166&0.0215&0.1302&0.1898&0.0847&0.1040&0.0123&0.0226&0.0076&0.0110\\
SASRec &0.0226&0.0421&0.0116&0.0179&0.1093&0.1818&0.0606&0.0839&0.0178&0.0358&0.0095&0.0152\\
\midrule
RPG &0.0264&0.0383&0.0181&0.0219&0.0832&0.1080&0.0620&0.0700&0.0261&0.0390&0.0178&0.0219\\
CARE &0.0228&0.0461&0.0136&0.0211&0.1308&0.2266&0.0828&0.1135&0.0214&0.0465&0.0130&0.0211\\
DiscRec &0.0283&0.0395&0.0184&0.0219&0.1379&0.1959&0.0925&0.1113&0.0223&0.0365&0.0137&0.0183\\
\midrule
TIGER (RQ-VAE) &0.0260&0.0371&0.0172&0.0208&0.1276&0.1868&0.0842&0.1034&0.0201&0.0331&0.0125&0.0167\\
LESSRec (RQ-VAE) &0.0319&0.0457&0.0213&0.0257&0.1479&0.2092&0.1000&0.1198&0.0276&0.0440&0.0174&0.0227\\
\rowcolor{black!6}
\quad \textbf{\textit{Improv.}} & \textbf{+22.7\%} & \textbf{+23.2\%} & \textbf{+23.8\%} & \textbf{+23.6\%} & \textbf{+15.9\%} & \textbf{+12.0\%} & \textbf{+18.8\%} & \textbf{+15.9\%} & \textbf{+37.3\%} & \textbf{+32.9\%} & \textbf{+39.2\%} & \textbf{+35.9\%}\\
LESSRec+ (RQ-VAE) & 0.0335 & 0.0475 & 0.0225 & 0.0270 & 0.1559 & 0.2163 & 0.1067 & 0.1263 & 0.0297 & 0.0464 & 0.0187 & 0.0241\\
\rowcolor{black!6}
\quad \textbf{\textit{Improv.}} & \textbf{+28.8\%} & \textbf{+28.0\%} & \textbf{+30.8\%} & \textbf{+29.8\%} & \textbf{+22.2\%} & \textbf{+15.8\%} & \textbf{+26.7\%} & \textbf{+22.1\%} & \textbf{+47.8\%} & \textbf{+40.2\%} & \textbf{+49.6\%} & \textbf{+44.3\%}\\
\midrule
TIGER (RQ-Kmeans) &0.0316&0.0465&0.0212&0.0260&0.1297&0.1880&0.0838&0.1027&0.0228&0.0369&0.0145&0.0190\\
LESSRec (RQ-Kmeans) &0.0385&0.0568&0.0255&0.0314&0.1477&0.2076&0.0995&0.1189&0.0315&0.0489&0.0200&0.0256\\
\rowcolor{black!6}
\quad \textbf{\textit{Improv.}} & \textbf{+21.8\%} & \textbf{+22.2\%} & \textbf{+20.3\%} & \textbf{+20.8\%} & \textbf{+13.9\%} & \textbf{+10.4\%} & \textbf{+18.7\%} & \textbf{+15.8\%} & \textbf{+38.2\%} & \textbf{+32.5\%} & \textbf{+37.9\%} & \textbf{+34.7\%}\\
LESSRec+ (RQ-Kmeans) & 0.0387 & 0.0560 & 0.0259 & 0.0316 & 0.1555 & 0.2148 & 0.1061 & 0.1253 & 0.0343 & 0.0537 & 0.0218 & 0.0281\\
\rowcolor{black!6}
\quad \textbf{\textit{Improv.}} & \textbf{+22.5\%} & \textbf{+20.4\%} & \textbf{+22.2\%} & \textbf{+21.5\%} & \textbf{+19.9\%} & \textbf{+14.3\%} & \textbf{+26.6\%} & \textbf{+22.0\%} & \textbf{+50.4\%} & \textbf{+45.5\%} & \textbf{+50.3\%} & \textbf{+47.9\%}\\
\midrule
TIGER (LETTER) &0.0334&0.0513&0.0218&0.0276&0.1383&0.2027&0.0930&0.1138&0.0225&0.0375&0.0138&0.0187\\
LESSRec (LETTER) &0.0379&0.0584&0.0249&0.0315&0.1506&0.2177&0.0993&0.1209&0.0282&0.0471&0.0177&0.0238\\
\rowcolor{black!6}
\quad \textbf{\textit{Improv.}} & \textbf{+13.5\%} & \textbf{+13.8\%} & \textbf{+14.2\%} & \textbf{+14.1\%} & \textbf{+8.9\%} & \textbf{+7.4\%} & \textbf{+6.8\%} & \textbf{+6.2\%} & \textbf{+25.3\%} & \textbf{+25.6\%} & \textbf{+28.3\%} & \textbf{+27.3\%}\\
LESSRec+ (LETTER) & 0.0358 & 0.0577 & 0.0231 & 0.0301 & 0.1616 & 0.2270 & 0.1109 & 0.1320 & 0.0324 & 0.0511 & 0.0213 & 0.0273\\
\rowcolor{black!6}
\quad \textbf{\textit{Improv.}} & \textbf{+7.2\%} & \textbf{+12.5\%} & \textbf{+6.0\%} & \textbf{+9.1\%} & \textbf{+16.8\%} & \textbf{+12.0\%} & \textbf{+19.2\%} & \textbf{+16.0\%} & \textbf{+44.0\%} & \textbf{+36.3\%} & \textbf{+54.3\%} & \textbf{+46.0\%}\\
\bottomrule
\end{tabular}}
\end{table*}

Table~\ref{tab:main} compares LESSRec and LESSRec+ with conventional sequential and generative recommenders on Games, Software, and Beauty. We summarize the main observations as follows.
\begin{itemize}[leftmargin=*,nosep]
\item \textbf{Comparison with existing recommenders.} Generative methods based on semantic IDs show an advantage over conventional sequential recommenders. Their semantic-token representations allow related items to share information while expressing item semantics at multiple granularities, which can benefit next-item generation. Building on these representations, the best LESSRec or LESSRec+ configuration achieves the highest score in every dataset--metric column. This supports explicitly modeling the different predictive roles of historical semantic levels to better exploit semantic IDs.
\item \textbf{Consistent gains across tokenizers.} Both models outperform the corresponding baseline under all three tokenizers across all datasets and metrics. Relative gains range from 6.2\% to 39.2\% for LESSRec and from 6.0\% to 54.3\% for LESSRec+. The improvements are not confined to a particular quantizer. These results suggest that models that encode historical sequences separately by semantic level and use same-level history for prediction are better suited to the hierarchical structure of semantic IDs obtained through residual quantization.
\item \textbf{Benefits of level-specific encoders.} LESSRec implements level isolation through attention masking, a simple approach in which all semantic levels still share the encoder layers. LESSRec+ instead assigns separate encoder-layer parameters to each semantic-level history while keeping the total parameter count unchanged. It improves over LESSRec in 31 of the 36 dataset--tokenizer--metric comparisons. This overall trend suggests that fully separating encoder-layer parameters helps each encoder capture its assigned level's semantics while limiting cross-level interference. This separation appears more effective than masking attention interactions alone while sharing attention projections and feed-forward network parameters across levels. The computational benefits of LESSRec+ are examined in Section~\ref{sec:resource_efficiency}.
\end{itemize}

\subsection{Ablation Study (RQ2)}
\label{sec:ablations}
Table~\ref{tab:ablation} examines the three main components, LIS, LAC, and ILS, on Games and Beauty. The \textit{+ILS} variant adds only the item-level semantic sequence to TIGER. The remaining variants progressively add LIS, LAC, and ILS to obtain LESSRec.

Adding ILS alone improves TIGER on both datasets, but its results remain below all variants that include LIS. Thus, supplying item-level context alone does not account for the full model's gains. Adding LIS improves performance even with unrestricted decoder access, suggesting that separating semantic levels during encoding helps retain useful same-level contextual information. Adding LAC yields further gains, indicating that explicitly selecting same-level history benefits prediction beyond encoder-side isolation alone.

Finally, adding ILS to LIS+LAC achieves the best performance among the compared variants on both datasets. Together with the ILS-only results, this supports its complementary role: the interaction-ordered item-level semantic sequence supplies overall item semantics alongside the same-level historical information learned and accessed through LIS and LAC. Additional design-choice ablations are presented in Appendix~\ref{app:additional_experiments}.

\begin{table}[t]
\caption{Component ablations on Games and Beauty.}
\label{tab:ablation}
\centering
\small
\setlength{\tabcolsep}{2pt}
\renewcommand{\arraystretch}{1.10}
\begin{tabular*}{\columnwidth}{@{\extracolsep{\fill}}>{\raggedright\arraybackslash}p{0.48\columnwidth}rrrr@{}}
\toprule
\multirow{2}{*}{Variant} & \multicolumn{2}{c}{Games} & \multicolumn{2}{c}{Beauty}\\
\cmidrule(lr){2-3}\cmidrule(lr){4-5}
& R@10 & N@10 & R@10 & N@10\\
\midrule
\textbf{TIGER (baseline)} & 0.0371 & 0.0208 & 0.0331 & 0.0167\\
\midrule
\quad + ILS & 0.0408 & 0.0231 & 0.0344 & 0.0169\\
\quad + LIS & 0.0418 & 0.0234 & 0.0374 & 0.0191\\
\quad + LIS + LAC & 0.0422 & 0.0239 & 0.0395 & 0.0200\\
\quad + LIS + LAC + ILS (\textbf{LESSRec}) & \textbf{0.0457} & \textbf{0.0257} & \textbf{0.0440} & \textbf{0.0227}\\
\bottomrule
\end{tabular*}
\end{table}

\subsection{Effect of Different Settings (RQ3)}
\label{sec:settings_analysis}
\subsubsection{Fusion Coefficient $\alpha$}
\label{sec:alpha_sensitivity}
The decoder combines the level-aligned token history and the item-level semantic sequence using the coefficient $\alpha$ in Eq.~\ref{eq:dual_mix}. We vary $\alpha$ from 0 (only the level-aligned token history) to 1 (only the item-level semantic sequence), keeping the other settings fixed. Figure~\ref{fig:alpha_sensitivity} shows the results on Games and Beauty. Both datasets achieve their best results with intermediate coefficients, supporting the complementary roles of same-level historical information and overall item semantics.

\begin{figure}[t]
\centering
\includegraphics[width=\columnwidth]{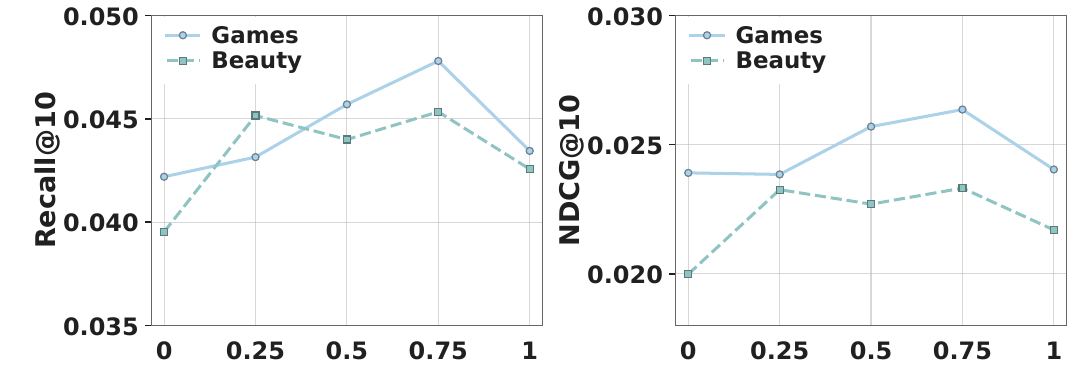}
\caption{Sensitivity to the cross-attention fusion coefficient $\alpha$ on Games and Beauty.}
\Description{Two line-plot panels showing Recall at 10 and NDCG at 10 as functions of the cross-attention fusion coefficient alpha on Games and Beauty. Both datasets achieve their best results with intermediate coefficients.}
\label{fig:alpha_sensitivity}
\end{figure}

\subsubsection{Number of Semantic Tokens per Item}
\label{sec:sid_count_analysis}
Figure~\ref{fig:sid_count_ablation} shows the results on Games with four, five, or six semantic tokens per item. For LESSRec+, the encoder depth allocations from coarse to fine semantic levels are $[2,1,1,1,1]$ for five tokens and $[1,1,1,1,1,1]$ for six tokens. Both LESSRec and LESSRec+ consistently outperform TIGER across all three settings. These results suggest that modeling historical semantic levels benefits hierarchical semantic IDs with different token counts, rather than depending on a particular count.

\begin{figure}[!t]
\centering
\includegraphics[width=\columnwidth]{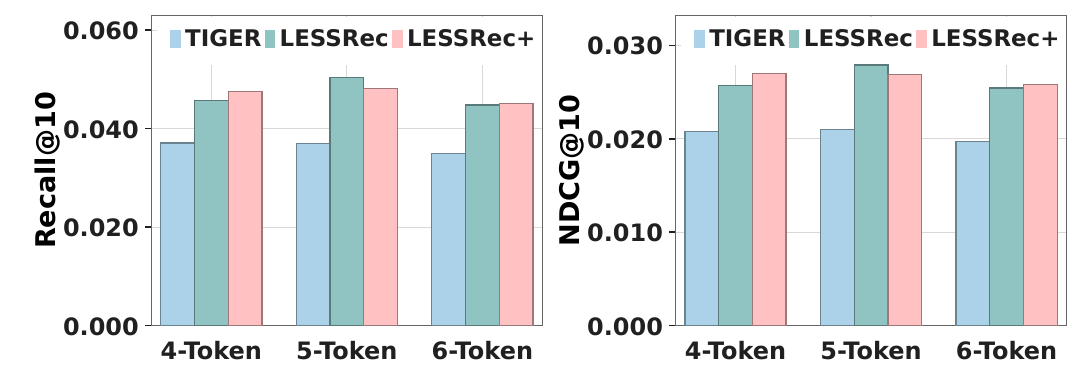}
\caption{Effect of semantic-token count on Games.}
\Description{Two grouped bar charts compare TIGER, LESSRec, and LESSRec+ using four, five, and six semantic tokens per item. The left panel shows Recall at 10, and the right panel shows NDCG at 10. Each group contains three touching bars ordered from left to right as TIGER, LESSRec, and LESSRec+.}
\label{fig:sid_count_ablation}
\end{figure}

\subsubsection{Encoder Depth Allocation}
\label{sec:encoder_depth_analysis}
Table~\ref{tab:encoder_depth_allocation} shows the results of five encoder depth allocations on Games, each retaining six layers in total. No allocation performs best across all metrics: $[3,1,1,1]$ achieves the highest R@5 and N@5, while $[1,1,2,2]$ achieves the highest R@10; the two share the highest N@10 at the reported precision. The default allocation, $[2,2,1,1]$, remains competitive but is not uniformly optimal. These results do not indicate a consistent advantage from assigning more layers exclusively to earlier or later semantic levels. In contrast, $[1,2,2,1]$ performs worst across all metrics, suggesting that allocating additional depth to the middle levels is less effective in this setting. Nevertheless, all five allocations outperform TIGER, indicating that LESSRec+'s gains persist across the tested depth distributions. Choosing how to distribute encoder capacity across semantic levels remains a useful direction for further optimization.

\begin{table}[t]
\caption{LESSRec+ encoder layer allocations on Games.}
\label{tab:encoder_depth_allocation}
\centering
\small
\setlength{\tabcolsep}{2pt}
\renewcommand{\arraystretch}{1.10}
\begin{tabular*}{\columnwidth}{@{\extracolsep{\fill}}lrrrr@{}}
\toprule
Allocation & R@5 & R@10 & N@5 & N@10\\
\midrule
{[2,2,1,1]} & 0.0335 & 0.0475 & 0.0225 & 0.0270\\
{[3,1,1,1]} & 0.0340 & 0.0469 & 0.0230 & 0.0271\\
{[1,2,2,1]} & 0.0312 & 0.0459 & 0.0210 & 0.0257\\
{[1,1,2,2]} & 0.0332 & 0.0481 & 0.0222 & 0.0271\\
{[1,1,1,3]} & 0.0330 & 0.0478 & 0.0217 & 0.0266\\
\bottomrule
\end{tabular*}
\end{table}

\subsection{Complexity and Resource Efficiency (RQ4)}
\label{sec:resource_efficiency}

Table~\ref{tab:attention_complexity} compares the computational complexity of attention operations. The expressions exclude linear projections, feed-forward layers, and repeated beam-search steps. Let $D_e$ and $D_d$ denote the total encoder and decoder depths, $Q$ the number of decoder query positions, $L$ the number of semantic levels, $T$ the number of historical items, and $d$ the hidden dimension, with EOS and padding omitted from the sequence-length analysis.

\begin{table}[!t]
\caption{Complexity of attention-matrix operations.}
\label{tab:attention_complexity}
\centering
\small
\setlength{\tabcolsep}{2pt}
\renewcommand{\arraystretch}{1.10}
\begin{tabular*}{\columnwidth}{@{\extracolsep{\fill}}lcc@{}}
\toprule
Model & \shortstack{Encoder\\self-attention} & \shortstack{Decoder\\cross-attention}\\
\midrule
TIGER & $\mathcal{O}(D_eL^2T^2d)$ & $\mathcal{O}(D_dQLTd)$\\
LESSRec & $\mathcal{O}(D_eL^2T^2d)$ & $\mathcal{O}(D_dQLTd)+\mathcal{O}(D_dQTd)$\\
LESSRec+ & $\mathcal{O}(D_eT^2d)$ & $\mathcal{O}(D_dQLTd)+\mathcal{O}(D_dQTd)$\\
\bottomrule
\end{tabular*}
\end{table}

TIGER and LESSRec compute self-attention over $LT$ historical positions in each encoder layer, giving $\mathcal{O}(D_eL^2T^2d)$. LESSRec+ processes sequences of length $T$, with the level-specific encoder depths summing to $D_e$, giving $\mathcal{O}(D_eT^2d)$. This yields a factor of $L^2$ reduction in these operations.

Decoder cross-attention over the encoded token history costs $\mathcal{O}(D_dQLTd)$. Both LESSRec variants retain this branch and add an item-level branch costing $\mathcal{O}(D_dQTd)$. The decoder cost is therefore the sum of these two terms, with the additional branch scaling linearly with history length. The main reduction in attention computation comes from the level-specific encoders in LESSRec+.

Table~\ref{tab:history_length_performance_efficiency} compares model parameters, time to convergence, average training time per epoch, and peak training memory for TIGER, LESSRec, and LESSRec+. Both proposed models add only 0.4\% more parameters, primarily from the item-representation MLP and a small number of item-level positional parameters. Time to convergence decreases by 34.0\% for LESSRec and 65.2\% for LESSRec+, consistent with the models learning useful predictive patterns more readily from separately encoded same-level histories.

LESSRec takes 5.8\% longer per epoch because attention masking retains the original attention computation, while the item-level semantic sequence adds representation construction and a second cross-attention branch. In contrast, LESSRec+ reduces per-epoch time by 33.7\% and peak training memory by 39.2\%. Its level-specific encoders process shorter histories independently, avoiding cross-level attention computations that LESSRec still performs before masking.

\begin{table}[!t]
\caption{Model parameters and training costs. Colored arrows indicate percentage increases (red) or decreases (green) relative to TIGER.}
\label{tab:history_length_performance_efficiency}
\centering
\small
\setlength{\tabcolsep}{2pt}
\renewcommand{\arraystretch}{1.10}
\begin{tabular*}{\columnwidth}{@{\extracolsep{\fill}}lcccc@{}}
\toprule
Model & \shortstack{Params\\(M)} & \shortstack{Training time\\to convergence (h)} & \shortstack{Training time\\per epoch (s)} & \shortstack{Peak\\Memory (GiB)}\\
\midrule
TIGER & 60.86 & 2.27 & 99.53 & 8.79\\
\midrule
LESSRec & \shortstack{61.12\\\textcolor{red}{\(\uparrow\)\,0.4\%}} & \shortstack{1.49\\\textcolor{green!50!black}{\(\downarrow\)\,34.0\%}} & \shortstack{105.33\\\textcolor{red}{\(\uparrow\)\,5.8\%}} & \shortstack{9.46\\\textcolor{red}{\(\uparrow\)\,7.5\%}}\\
\midrule
LESSRec+ & \shortstack{61.12\\\textcolor{red}{\(\uparrow\)\,0.4\%}} & \shortstack{0.79\\\textcolor{green!50!black}{\(\downarrow\)\,65.2\%}} & \shortstack{65.95\\\textcolor{green!50!black}{\(\downarrow\)\,33.7\%}} & \shortstack{5.35\\\textcolor{green!50!black}{\(\downarrow\)\,39.2\%}}\\
\bottomrule
\end{tabular*}
\end{table}

\subsection{Effect of History Length (RQ5)}
\label{sec:history_length_analysis}
We group test cases on Games into three history-length buckets: fewer than 10, 10--14, and 15--20 interactions. Figure~\ref{fig:history_length_performance_efficiency} reports Recall@5 over seed 42 and encoder CUDA time per sample. Bars use the left axes for absolute values, while lines use the right axes for percentage changes relative to TIGER within each bucket. 

Both proposed models outperform TIGER in all three buckets, with larger relative gains for longer histories. LESSRec's gains increase from 23.4\% in the shortest-history bucket to 46.7\% in the longest, while LESSRec+'s gains increase from 29.8\% to 53.3\%. These results suggest that same-level historical modeling remains effective across different history lengths and offers more pronounced benefits for users with longer observed histories. A possible explanation is that level isolation confines each level-aligned prediction to same-level history, only $1/L$ of all historical tokens, which limits dilution from irrelevant tokens as histories grow. The encoding-time results distinguish the two implementations. LESSRec remains close to TIGER, with changes ranging from a 1.6\% reduction to a 5.1\% increase. LESSRec+ incurs a 9.2\% overhead in the shortest-history bucket but reduces encoding time by 13.4\% and 29.5\% in the middle and longest buckets, respectively. This pattern is consistent with the computational advantage of processing shorter, level-specific sequences becoming more consequential as history length grows.

\begin{figure}[!t]
\centering
\includegraphics[width=\columnwidth,trim=1.526698bp 43bp 3.963257bp 20bp]{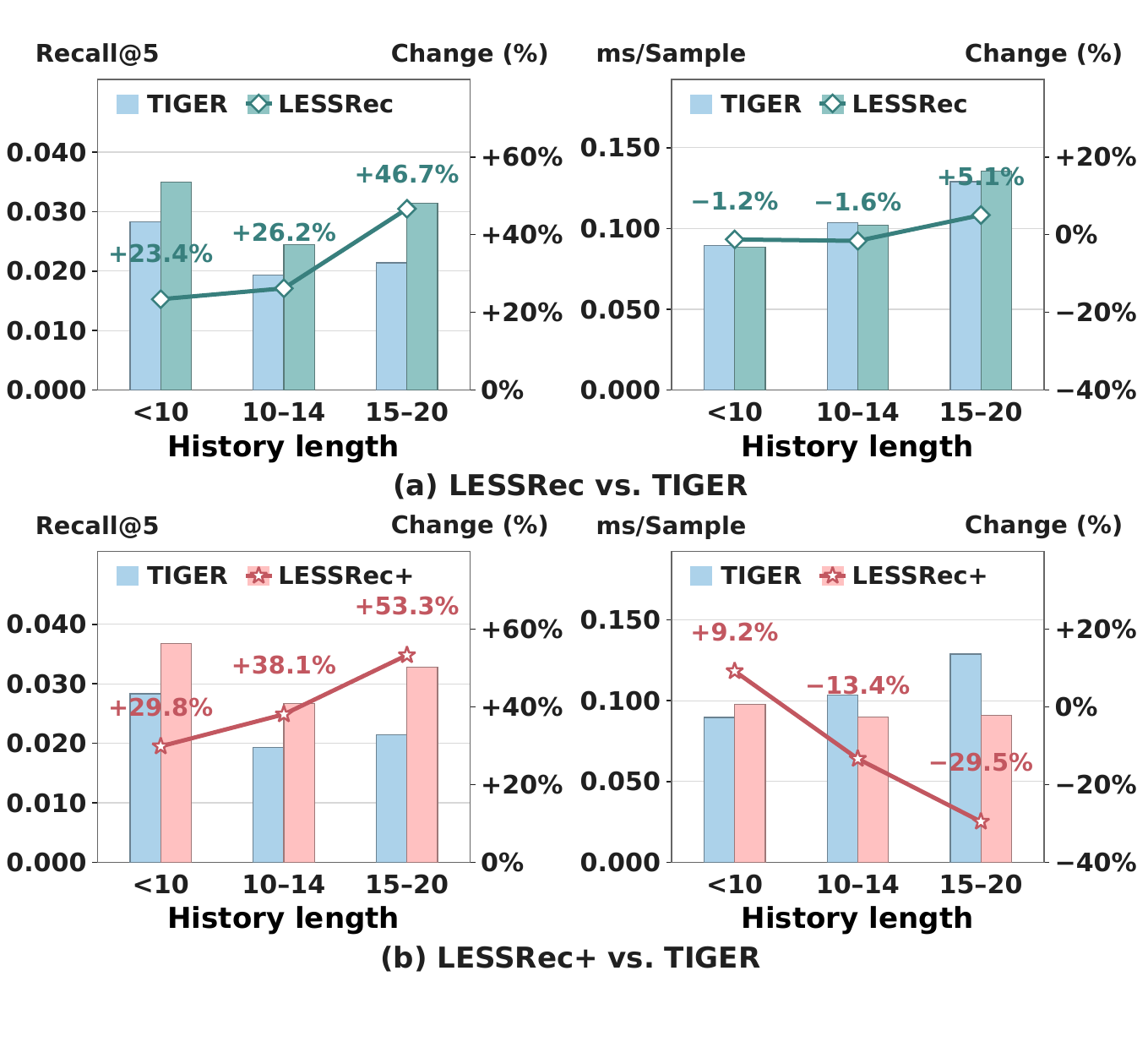}
\caption{History-length analysis: (a) LESSRec versus TIGER; (b) LESSRec+ versus TIGER. Each row shows Recall@5 (left) and encoder CUDA time per sample (right).}
\label{fig:history_length_performance_efficiency}
\end{figure}

\section{Related Work}
\label{sec:related_work}
\subsection{Semantic Item Tokenization}
{\emergencystretch=1em
Semantic item tokenization determines the information encoded in discrete identifiers. VQ-Rec~\cite{hou2023learning} discretizes item text embeddings using product quantization (PQ)~\cite{jegou2011product}, while TIGER~\cite{rajput2023recommender} obtains hierarchical semantic IDs through RQ-VAE~\cite{lee2022autoregressive}. Related approaches explore learnable document identifiers~\cite{sun2023learning}, collaborative-aware item tokenization~\cite{qu2025tokenrecb}, tree-structured codebooks~\cite{zheng2026universal}, textual identifiers~\cite{tan2024idgenreca}, and action-context tokenization~\cite{hou2025actionpiece}. LESSRec keeps semantic IDs fixed and studies their use in historical sequence modeling. We evaluate TIGER, LESSRec, and LESSRec+ with RQ-VAE, RQ-Kmeans, and LETTER tokenization.\par}

\subsection{Semantic ID-Based Generative Recommendation}
{\emergencystretch=1em
Semantic ID-based generative recommenders predict the next item's identifier from user history. TIGER adopts autoregressive generation of residual-quantized IDs~\cite{rajput2023recommender}. ETEGRec couples tokenizer and recommender learning through sequence--item and preference--semantic alignment with alternating optimization~\cite{liu2025generative}. EAGER-LLM integrates external behavioral and semantic information into a decoder-only language model through dual-source item indices and multiscale alignment tasks~\cite{hong2025eagerllm}. RPG generates long, unordered semantic IDs in parallel with graph-constrained decoding~\cite{hou2025generating}, while CARE combines progressive history encoding with query-anchored reasoning to mitigate bias amplification~\cite{lin2026bringing}. These methods improve joint learning, information integration, or generation strategies, complementing our focus on historical semantic-level modeling.\par}

\subsection{Structure-Aware Sequence Modeling}
{\emergencystretch=1em
Structure-aware methods incorporate the hierarchy of semantic IDs and item boundaries into sequence modeling. TrieRec encodes prefix-tree topology through absolute and relative positional encodings~\cite{xu2026trieaware}. DiscRec combines intra-item attention with embedding-level fusion of semantic and collaborative representations~\cite{liu2025discreca}. Hi-SAM compresses items into anchor tokens and separates inter-item from intra-item positional information~\cite{pan2026hisam}. BARGE restores item context during encoding and mitigates semantic drift through hierarchical path reranking and dual-path decoding~\cite{zeng2026bridging}. LESSRec encodes same-level histories across items and aligns token-history access with the target semantic level, while retaining item-level context through a separate cross-attention branch. We compare LESSRec and LESSRec+ with DiscRec~\cite{liu2025discreca}, a representative structure-aware baseline, in Section~\ref{sec:experiments}.\par}

\section{Conclusion}
In this paper, we show that historical semantic levels play different roles in target-token prediction, with same-level history generally being more informative. LESSRec separately encodes historical token sequences at each semantic level and aligns decoder access with the target level. An item-level semantic sequence combines contextualized token states across levels within each item, providing complementary item semantics through dual cross-attention. LESSRec+ reduces encoder computation through level-specific encoder groups while retaining the same parameter count as LESSRec. Experiments on three public datasets with three tokenizers demonstrate both models' recommendation effectiveness, while component ablations and efficiency analyses support the proposed design and LESSRec+'s computational benefits.

\bibliographystyle{ACM-Reference-Format}
\bibliography{references}
\appendix
\section{Why Same-Level History Helps}
\label{app:level_theory}
\begingroup
\emergencystretch=2em
We seek to explain why same-level history can be more useful for target-token prediction from two perspectives: predictive information and learning signal. We first introduce assumptions on the residual structure and temporal dependence, then examine their implications for prediction and learning.

\paragraph{Assumptions.}
First, we assume that successive residual components are sufficiently separated for quantization to recover them. Write the tokenizer latent as $\mathbf{u}_i=\sum_{r=1}^{L}\mathbf{e}_{r,Z_{i,r}}+\boldsymbol{\xi}_i$, where $Z_{i,r}$ is a discrete component. With minimum codeword separation $\Delta_r=\min_{a\ne b}\|\mathbf{e}_{r,a}-\mathbf{e}_{r,b}\|_2>0$, assume
\begin{equation}
\left\|\sum_{j>r}\mathbf{e}_{j,Z_{i,j}}+\boldsymbol{\xi}_i\right\|_2
<\Delta_r/2\qquad\text{for all }i,r.
\label{eq:theory_separation}
\end{equation}
This condition ensures $s_{i,r}=Z_{i,r}$. Inductively, the residual is the correct codeword plus a tail $\mathbf{t}$ with $\|\mathbf{t}\|_2<\Delta_r/2$. Every alternative is farther away, at distance at least $\Delta_r-\|\mathbf{t}\|_2>\|\mathbf{t}\|_2$. Scalar codebooks $\{-3^{L-r},3^{L-r}\}$ with zero final residual provide one example satisfying this condition.

Second, we assume that the component trajectories $Z_{1:T+1,r}$ are independent across levels, with each following a first-order Markov chain. This describes temporal dependence within each residual level; RQ-VAE training alone does not guarantee it. Write $\mathcal{H}=(H_1,\ldots,H_L)$, $Y_k=s_{T+1,k}$, and $C_k=Y_{<k}$. All logarithms are natural.

\paragraph{Predictive information.}
Under these assumptions, we can compare the predictive value of different histories through their lowest achievable token log loss:
\begin{equation*}
\mathcal{R}^{*}(A)=\inf_q\mathbb{E}[-\log q(Y_k\mid C_k,A)].
\end{equation*}
Here, $q$ ranges over all conditional predictors. The following proposition relates the benefit of same-level history to its temporal predictive information.

\begin{appcproposition}[Predictive information]
\label{thm:level_information}
Under the stated assumptions, for $r\ne k$,
\begin{equation}
\begin{aligned}
\mathcal{R}^{*}(\mathcal{H})&=\mathcal{R}^{*}(H_k)
=\mathrm{H}(Y_k\mid s_{T,k}),\\
\mathcal{R}^{*}(H_r)-\mathcal{R}^{*}(H_k)
&=\mathrm{I}(Y_k;s_{T,k})\geq0.
\end{aligned}
\label{eq:theory_information_gap}
\end{equation}
The gap is positive when the previous same-level token carries predictive information about the target.
\end{appcproposition}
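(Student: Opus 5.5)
The plan is to reduce everything to conditional entropies and then use the two assumptions to simplify the conditioning sets. First, for any conditioning variable $A$, the infimum $\mathcal{R}^{*}(A)$ is attained by the true conditional law $p(Y_k\mid C_k,A)$. For any predictor $q$,
\[
\mathbb{E}[-\log q(Y_k\mid C_k,A)]-\mathrm{H}(Y_k\mid C_k,A)=\mathbb{E}_{C_k,A}\,\mathrm{KL}\bigl(p(\cdot\mid C_k,A)\,\|\,q(\cdot\mid C_k,A)\bigr)\geq 0 .
\]
Hence $\mathcal{R}^{*}(A)=\mathrm{H}(Y_k\mid C_k,A)$. This quantity is finite because $Y_k$ takes values in a finite codebook. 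Second, the separation condition~\eqref{eq:theory_separation} gives $s_{i,r}=Z_{i,r}$ for all $i\le T+1$ and all $r$, by the inductive nearest-codeword argument sketched before the proposition. So $H_r=Z_{1:T,r}$, $Y_k=Z_{T+1,k}$ and $C_k=(Z_{T+1,1},\ldots,Z_{T+1,k-1})$, and every statement about tokens becomes a statement about the latent component trajectories.

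Next I will use independence across levels. The trajectory $Z_{1:T+1,k}$ is independent of the family $\{Z_{1:T+1,r}\}_{r\ne k}$. The pair $(Y_k,H_k)$ is a function of the level-$k$ trajectory, while $C_k$ and all $H_r$ with $r\ne k$ are functions of the other levels' trajectories. Therefore $(Y_k,H_k)$ is independent of $(C_k,(H_r)_{r\ne k})$. Two consequences follow directly:
\[
\mathrm{H}(Y_k\mid C_k,\mathcal{H})=\mathrm{H}(Y_k\mid H_k)=\mathrm{H}(Y_k\mid C_k,H_k),
\qquad
\mathrm{H}(Y_k\mid C_k,H_r)=\mathrm{H}(Y_k)\quad (r\ne k).
\]
The second identity also covers the case $r<k$, where $C_k$ and $H_r$ share a level. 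I will handle this by noting that the pair $(C_k,H_r)$ as a whole depends only on levels other than $k$, so the independence still applies. Finally, the first-order Markov property of $Z_{1:T+1,k}$ gives $\mathrm{H}(Y_k\mid H_k)=\mathrm{H}(Z_{T+1,k}\mid Z_{T,k})=\mathrm{H}(Y_k\mid s_{T,k})$. This proves the first line of~\eqref{eq:theory_information_gap}.

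For the gap, I subtract the two expressions: $\mathcal{R}^{*}(H_r)-\mathcal{R}^{*}(H_k)=\mathrm{H}(Y_k)-\mathrm{H}(Y_k\mid s_{T,k})=\mathrm{I}(Y_k;s_{T,k})$. This is nonnegative, and it is strictly positive exactly when $Y_k$ and $s_{T,k}$ are dependent, i.e.\ when the previous same-level token carries predictive information. I will also remark that the Markov chain need not be time-homogeneous for this argument.

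I expect the main care point to be the independence step rather than any computation. It must be made explicit that independence of the level trajectories implies joint independence of $(Y_k,H_k)$ from the whole vector $(C_k,(H_r)_{r\ne k})$, not merely pairwise independence. It must also be checked that the separation condition gives exact token recovery for the target item $T+1$, since the assumption is stated for all $i$.
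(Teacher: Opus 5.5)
Your proposal is correct and follows essentially the same route as the paper's proof. Both identify $\mathcal{R}^{*}(A)$ with $\mathrm{H}(Y_k\mid C_k,A)$ via the KL decomposition, use independence across levels and the first-order Markov property to reduce $P(Y_k\mid C_k,\mathcal{H})$ to $P(Y_k\mid s_{T,k})$ and $P(Y_k\mid C_k,H_r)$ to $P(Y_k)$, and then subtract; your version additionally makes explicit the exact token recovery for item $T+1$ and the need for joint, not merely pairwise, independence of the level trajectories.
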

\begin{proof}
Expected log loss equals conditional entropy plus a nonnegative expected KL divergence, so its infimum is conditional entropy. Independence of level trajectories and the Markov property give
\begin{equation*}
\begin{aligned}
P(Y_k\mid C_k,\mathcal{H})&=P(Y_k\mid s_{T,k}),\\
P(Y_k\mid C_k,H_r)&=P(Y_k),\qquad r\ne k.
\end{aligned}
\end{equation*}
Substituting these distributions into the conditional entropies gives the first equality and the mutual-information gap.
\end{proof}

Same-level history therefore retains the predictive information of the full history under these assumptions. Any advantage over full-history attention must arise from how a restricted model uses that information.

\paragraph{Learning signal.}
We next examine this distinction with a single-head model. Use the binary construction above with $L>1$ and $T=1$. Let historical signs $X_r$ be independent and uniform, and set $Y=X_kE$ with independent noise $\Pr(E=-1)=p\in(0,1/2)$. Other levels have uniform next states independent of the history and $E$, so $C_k$ is independent of $(X_1,\ldots,X_L,Y)$. A one-layer, level-isolated encoder with fixed sign embeddings and identity value projection returns $X_r$. A single decoder head uses fixed finite scores depending only on level, with $a_r=\operatorname{softmax}(\ell)_r>0$. Its two contexts are
\begin{equation}
z_{\mathrm{same}}=X_k,\qquad
z_{\mathrm{all}}=mX_k+N,\quad
m=a_k,\quad N=\sum_{r\ne k}a_rX_r.
\label{eq:theory_binary_context}
\end{equation}
Here, $N$ collects the other levels' contributions. The classifier is $\Pr_w(Y=1\mid z)=\sigma(wz)$, where $|w|\leq W$, $W>0$, and $\sigma$ is the sigmoid. We omit layer normalization, feed-forward sublayers, EOS, and item memory to isolate the effect of mixing levels. The next proposition compares both prediction loss and the gradient used to learn $w$.

\begin{appcproposition}[Prediction loss and initial learning signal]
\label{prop:learning_advantage}
Let $\mathcal{L}(w;z)=\mathbb{E}\log(1+e^{-Ywz})$. Then
\begin{equation}
\inf_{|w|\leq W}\mathcal{L}(w;z_{\mathrm{all}})
\geq\inf_{|w|\leq W}\mathcal{L}(w;z_{\mathrm{same}}).
\label{eq:theory_binary_risk_bound}
\end{equation}
At $w=0$, define the per-example gradient $G=\partial_w\log(1+e^{-Ywz})$ and $\tau=1-2p$. For the average $\overline G_n$ of $n$ independent examples, let $\mathrm{SNR}_n=(\mathbb{E}\overline G_n)^2/\operatorname{Var}(\overline G_n)$. Then
\begin{equation}
\begin{aligned}
\mathbb{E}G_{\mathrm{all}}&=-m\tau/2,
&\mathbb{E}G_{\mathrm{same}}&=-\tau/2,\\
\mathrm{SNR}^{\mathrm{all}}_n
&=\frac{nm^2\tau^2}{m^2(1-\tau^2)+\sum_{r\ne k}a_r^2}
& &\\[-0.5em]
&<\frac{n\tau^2}{1-\tau^2}
=\mathrm{SNR}^{\mathrm{same}}_n.
\end{aligned}
\label{eq:theory_gradient_snr}
\end{equation}
\end{appcproposition}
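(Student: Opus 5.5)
The plan is to treat the two claims separately. In both, the only structural facts used are these: $Y=X_kE$ depends only on $(X_k,E)$; the other-level signs $X_r$ ($r\ne k$) are independent of $(X_k,E)$, uniform on $\{\pm1\}$, and hence mean zero; and $C_k$ is independent of everything, so it drops out. With $T=1$ and $L>1$, the contexts are exactly those in \eqref{eq:theory_binary_context}, with $0<m=a_k<1$ and $\sum_{r\ne k}a_r^2>0$ because every $a_r>0$.

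For the risk bound \eqref{eq:theory_binary_risk_bound}, I would reduce the mixed context to a rescaled same-level context by conditional Jensen. Fix $w$ and condition on $(X_k,E)$, which also fixes $Y$. The map $n\mapsto\log(1+e^{-Yw(mX_k+n)})$ is convex, since it is softplus composed with an affine map. Because $N$ is independent of $(X_k,E)$ and $\mathbb{E}N=0$, Jensen gives
\[
\mathbb{E}\bigl[\log(1+e^{-Yw(mX_k+N)})\mid X_k,E\bigr]\geq\log(1+e^{-YwmX_k}).
\]
Taking expectations yields $\mathcal{L}(w;z_{\mathrm{all}})\geq\mathcal{L}(mw;z_{\mathrm{same}})$. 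Since $0<m<1$, we have $|mw|\leq W$ whenever $|w|\leq W$, so the right side is at least $\inf_{|w'|\leq W}\mathcal{L}(w';z_{\mathrm{same}})$. Taking the infimum over $w$ on the left then gives the claim.

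For the gradient statement, I would first compute $G=-Yz\,\sigma(-Ywz)$, which at $w=0$ equals $-Yz/2$.
\begin{itemize}
\item Same-level context: $YX_k=X_k^2E=E$, so $G_{\mathrm{same}}=-E/2$. This gives mean $-\tau/2$, using $\mathbb{E}E=1-2p=\tau$, and variance $(1-\tau^2)/4$.
\item Mixed context: $G_{\mathrm{all}}=-(mE+YN)/2$.
\end{itemize}
The key step, and the one I would check most carefully, is the joint behaviour of $E$ and $YN$. Since $N$ is independent of $(X_k,E)$ and has mean zero:
\begin{itemize}
\item $\mathbb{E}[YN]=0$, so $\mathbb{E}G_{\mathrm{all}}=-m\tau/2$;
\item $\mathbb{E}[E\cdot YN]=\mathbb{E}[EY]\,\mathbb{E}N=0$, so the cross term in the variance vanishes;
\item $(YN)^2=N^2$, and the $X_r$ are independent with unit variance, so $\mathbb{E}N^2=\sum_{r\ne k}a_r^2$.
\end{itemize}
Together these give $\operatorname{Var}G_{\mathrm{all}}=\bigl(m^2(1-\tau^2)+\sum_{r\ne k}a_r^2\bigr)/4$.

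For $n$ i.i.d.\ examples, $(\mathbb{E}\overline G_n)^2/\operatorname{Var}\overline G_n=n(\mathbb{E}G)^2/\operatorname{Var}G$, and the factors of $4$ cancel. This produces both SNR expressions in \eqref{eq:theory_gradient_snr}.

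The strict inequality follows by dividing the numerator and denominator of $\mathrm{SNR}^{\mathrm{all}}_n$ by $m^2$. The denominator becomes $(1-\tau^2)+\sum_{r\ne k}a_r^2/m^2$, which is strictly larger than $1-\tau^2$ because $L>1$ and every $a_r>0$. Note also that $1-\tau^2>0$ because $p\in(0,1/2)$, so both ratios are finite and well defined. No earlier result beyond the stated construction is needed.
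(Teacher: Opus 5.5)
Your proposal is correct and follows the paper's proof essentially step for step: conditional Jensen over the mean-zero, independent $N$ gives $\mathcal{L}(w;z_{\mathrm{all}})\geq\mathcal{L}(mw;z_{\mathrm{same}})$, and the infimum is then taken over $[-mW,mW]\subseteq[-W,W]$; for the learning signal, $G=-Yz/2$ at $w=0$ and independence yields the stated mean and variance. You spell out a few steps more explicitly than the paper does, namely $YX_k=E$, the vanishing covariance, and $(YN)^2=N^2$. The strict SNR inequality also uses $\tau>0$, which holds because $p<1/2$.
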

\begin{proof}
For the loss comparison, $N$ is independent of $(X_k,Y)$ and has mean zero. Conditional Jensen's inequality therefore gives $\mathcal{L}(w;mX_k+N)\geq\mathcal{L}(mw;X_k)$. Taking infima over $[-mW,mW]\subseteq[-W,W]$ proves the bound. For the learning comparison, $G=-Yz/2$ at initialization. Independence yields
\begin{equation*}
\operatorname{Var}(G_{\mathrm{all}})
=\tfrac14\left[m^2(1-\tau^2)+\sum_{r\ne k}a_r^2\right].
\end{equation*}
The aligned case has $m=1$ and $N=0$. Averaging divides the variance by $n$; positive mismatched weights make the SNR inequality strict.
\end{proof}

The second proposition explains how mixing levels can weaken the learning signal even when all predictive information remains available in the history. With equal scores, the expected gradient magnitude falls by a factor of $L$, and $\mathrm{SNR}^{\mathrm{all}}_n=n\tau^2/(L-\tau^2)$. The analysis thus supports same-level access under the stated assumptions. Its learning conclusion concerns initialization, not global convergence; learned projections and content-dependent scores may change the comparison.
\par\endgroup
\label{app:level_theory_end}

\section{Residual Quantization for Item Tokenization}
\label{app:tokenization}
We describe the offline tokenizers used to construct the fixed semantic IDs. Let $\mathbf{a}_i$ denote the content embedding of item $v_i$, $L$ the number of semantic tokens, and $K$ the number of token values at each level. Tokenizer parameters are trained separately from the recommendation model.

\subsection{RQ-VAE}
\label{sec:rqvae}
Following residual quantization for image generation~\cite{lee2022autoregressive} and its use in TIGER~\cite{rajput2023recommender}, RQ-VAE encodes the content embedding into a continuous latent vector $\mathbf{u}_i=\mathcal{E}_{\mathrm{q}}(\mathbf{a}_i)$. Each level $r$ has a codebook $\mathcal{C}_r=\{\mathbf{e}_{r,k}\}_{k=0}^{K-1}$. Starting with $\boldsymbol{\rho}_{i,0}=\mathbf{u}_i$, quantization selects the nearest codeword and subtracts it from the residual:
\begin{equation}
\begin{aligned}
s_{i,r}&=\arg\min_{0\leq k<K}
\|\boldsymbol{\rho}_{i,r-1}-\mathbf{e}_{r,k}\|_2^2,\\
\boldsymbol{\rho}_{i,r}&=\boldsymbol{\rho}_{i,r-1}-\mathbf{e}_{r,s_{i,r}},
\qquad r=1,\ldots,L.
\end{aligned}
\label{eq:app_rq_assign}
\end{equation}
The selected indices form $\mathbf{s}_i=(s_{i,1},\ldots,s_{i,L})$. Earlier codewords approximate the latent representation, while later codewords describe the remaining residual, producing a coarse-to-fine semantic-level structure. The quantized representation and reconstructed content embedding are
\begin{equation}
\widehat{\mathbf{u}}_i=\sum_{r=1}^{L}\mathbf{e}_{r,s_{i,r}},
\qquad
\widehat{\mathbf{a}}_i=\mathcal{D}_{\mathrm{q}}(\widehat{\mathbf{u}}_i).
\label{eq:app_rq_reconstruct}
\end{equation}
The encoder, decoder, and codebooks are jointly trained using reconstruction, codebook, and commitment losses~\cite{vandenoord2017neurala}:
\begin{equation}
\begin{aligned}
\ell_{i,r}^{\mathrm{vq}}={}&
\operatorname{MSE}(\operatorname{sg}[\boldsymbol{\rho}_{i,r-1}],
\mathbf{e}_{r,s_{i,r}})\\
&+\beta\operatorname{MSE}(\boldsymbol{\rho}_{i,r-1},
\operatorname{sg}[\mathbf{e}_{r,s_{i,r}}]),\\
\mathcal{L}_{\mathrm{q}}={}&\frac{1}{N}\sum_{i=1}^{N}\left[
\operatorname{MSE}(\mathbf{a}_i,\widehat{\mathbf{a}}_i)
+\frac{\lambda}{L}\sum_{r=1}^{L}\ell_{i,r}^{\mathrm{vq}}\right].
\end{aligned}
\label{eq:app_rq_loss}
\end{equation}
Here, $N$ is the number of items, $\operatorname{MSE}$ averages squared errors over vector coordinates, and $\operatorname{sg}$ denotes stop-gradient. The codebook term fits selected vectors to the residuals, while the commitment term encourages the encoded residuals to remain close to their selected vectors. The implementation uses $\beta=0.25$, $\lambda=1$, a straight-through estimator for quantization, and K-means initialization of the codebooks from an initial batch of encoded residuals. Codebooks are subsequently updated by gradient descent; K-means initialization alone does not make this tokenizer a residual K-means tokenizer.

\subsection{RQ-Kmeans}
\label{app:rqkmeans}
RQ-Kmeans constructs hierarchical semantic IDs by clustering residual representations at successive levels~\cite{luo2025qarm,hu2026once}. For the residuals entering level $r$, standard residual K-means learns a codebook through
\begin{equation}
\min_{\mathcal{C}_r}\sum_{i=1}^{N}\min_{\mathbf{e}\in\mathcal{C}_r}
\|\boldsymbol{\rho}_{i,r-1}-\mathbf{e}\|_2^2.
\label{eq:app_rqkmeans_objective}
\end{equation}
Each item is assigned to its nearest centroid, which is subtracted before quantizing the next residual. The resulting codeword indices form the semantic IDs, and the sum of the selected centroids approximates the input representation. Earlier levels capture larger-scale variation, while later levels refine the remaining residual.

Our implementation combines this residual formulation with a learned autoencoder and K-means-guided conditional subcodebooks. Let $\mathcal{C}_{r,g}=\{\mathbf{e}_{r,g,k}\}_{k=0}^{K_{r,g}-1}$ denote subcodebook $g$ at level $r$, containing $K_{r,g}$ vectors. Earlier semantic tokens determine the subcodebook through a routing function $\Gamma_r$:
\begin{equation}
\begin{aligned}
g_{i,r}&=\Gamma_r(s_{i,1},\ldots,s_{i,r-1}),\\
k_{i,r}&=\arg\min_{0\leq k<K_{r,g_{i,r}}}
\|\boldsymbol{\rho}_{i,r-1}-\mathbf{e}_{r,g_{i,r},k}\|_2^2,\\
\boldsymbol{\rho}_{i,r}&=\boldsymbol{\rho}_{i,r-1}
-\mathbf{e}_{r,g_{i,r},k_{i,r}}.
\end{aligned}
\label{eq:app_conditional_quantization}
\end{equation}
The initial residual is $\boldsymbol{\rho}_{i,0}=\mathcal{E}_{\mathrm{q}}(\mathbf{a}_i)$, and the first-level subcodebook is shared by all items. The selected subcodebook and local codeword index jointly identify the semantic token at each level. As in RQ-VAE, the selected vectors are summed and decoded to reconstruct the content embedding.

\section{Additional Experimental Analysis}
\label{app:additional_experiments}
Figure~\ref{fig:ablation_games_design} further examines LESSRec design variants on Games. \textit{w/o MLP} directly uses the mean-pooled representations as item vectors, while \textit{Cat.+MLP} replaces within-item mean pooling with concatenation before the MLP. Removing the MLP reduces performance in both metrics, supporting its role in refining pooled representations for recommendation. In contrast, concatenation followed by the MLP outperforms the mean-pooling-based LESSRec reference in both metrics. This suggests that retaining the individual semantic-level representations before learned fusion can benefit item-level modeling; mean pooling is therefore not necessarily the best aggregation choice.

We also evaluate \textit{w/o EOS}, which removes EOS from the encoder input. Performance remains close to LESSRec in both metrics. Although the EOS state can aggregate information across semantic levels and is accessible to the decoder, its removal produces no consistent performance gain. These results suggest that a single shared EOS state, despite its access to all semantic levels, does not introduce cross-level interference strong enough to noticeably affect recommendation performance in this setting. Historical semantic tokens cannot attend to EOS under LIS, so their encoded states remain level-isolated even when EOS is retained as the sequence terminator.

We further compare \textit{QKV LIS}, \textit{QKV LAC}, and \textit{QKV LIS+LAC}, which compute attention in separate semantic-level blocks in encoder self-attention, decoder cross-attention, or both, respectively, while retaining shared QKV projections. Splitting encoder-side QKV computation has little effect on performance, whereas splitting decoder-side computation leads to a clear decline. A possible explanation concerns attention to the shared item-level memory: in LESSRec, target positions at different semantic levels access the same item-level semantic sequence through a shared cross-attention module. Reorganizing decoder attention into separate level-specific blocks may affect how these queries access and combine the shared item-level context with their corresponding same-level histories, potentially weakening the complementarity of the two branches.

\begin{figure}[H]
\centering
\includegraphics[width=\columnwidth]{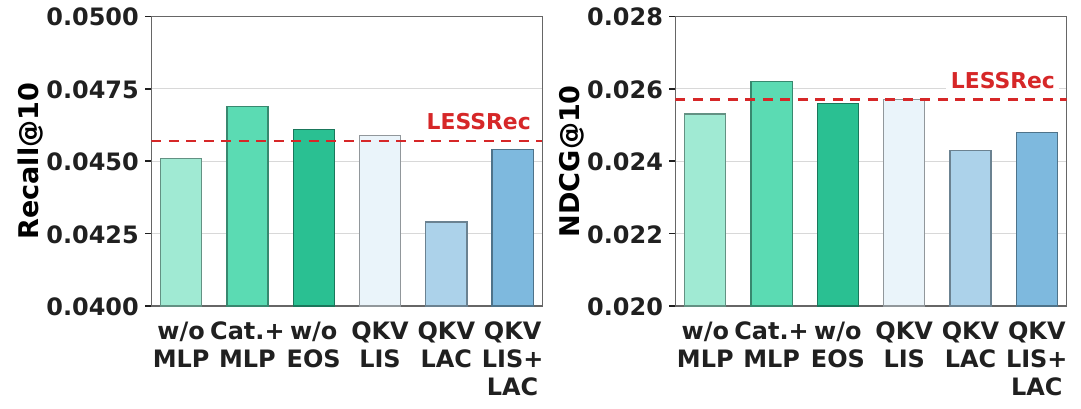}
\caption{Ablation of LESSRec design choices on Games.}
\Description{Two bar-chart panels show Recall at 10 and NDCG at 10 on Games for six LESSRec variants: mean pooling without an MLP, concatenation followed by an MLP, removal of EOS from the encoder input, and blockwise attention computation in encoder self-attention, decoder cross-attention, or both while retaining shared QKV projections. Red dashed lines mark full LESSRec. Concatenation followed by an MLP exceeds the reference in both panels, removing EOS and encoder-side block computation give performance close to the reference, and decoder-side block computation yields the lowest performance.}
\label{fig:ablation_games_design}
\end{figure}

\balance
\end{document}